\newcommand{\ACA}{\mathcal{A}}
\newcommand{\Conf}{\operatorname{Conf}}
\newcommand{\C}{\mathfrak{C}}
\newcommand{\NN}{\mathbb{N}}
\newcommand{\ZZ}{\mathbb{Z}}
\newcommand{\cyc}[1]{\widetilde{#1}}
\begin{document}
\mainmatter
\title{$5$-State Rotation-Symmetric Number-Conserving Cellular Automata are not Strongly Universal}
\titlerunning{Non-Universality of $5$-State Rotation-Symmetric NCCAs}
\author{Katsunobu Imai\inst{1} \and Hisamichi Ishizaka\inst{1} \and Victor Poupet\inst{2}}
\authorrunning{K. Imai and H. Ishizaka}
\institute{Graduate School of Engineering,
Hiroshima University,\\
Higashi-Hiroshima 739-8527 Japan\\
\mailsc\and
LIRMM, Universit\'e de Montpellier 2\\
161 rue Ada, 34392 Montpellier, France\\
\url{victor.poupet@lirmm.fr}
}
\maketitle

\begin{abstract}
	We study two-dimensional rotation-symmetric number-conserving cellular automata working on the von Neumann neighborhood (RNCA). It is known that such automata with 4 states or less are trivial, so we investigate the possible rules with 5 states. We give a full characterization of these automata and show that they cannot be strongly Turing universal. However, we give example of constructions that allow to embed some boolean circuit elements in a 5-states RNCA.
\end{abstract}

\section{Introduction}

Cellular automata (CA) are widely studied deterministic, discrete and massively parallel dynamical systems. They were introduced by J. von Neumann and S. Ulam in the 1940s (published posthumously in 1966) to study self replicating systems \cite{vonNeumann66}.

Von Neumann developped a 29-states CA working on the ``4 closest cells'' neighborhood (that would later be known as the ``von Neumann neighborhood'') capable of replicating patterns, and used it to describe the notion of computational universality. In 1968, E.~Codd improved von Neumann's construction by devising an 8-states universal rotation-symmetric CA working on the same neighborhood \cite{codd68} and proving that 2-states von Neumann neighborhood CA could not be strongly universal. In 1970, E. Banks proved the existence of a 4-states strongly universal rotation-symmetric CA on von Neumann's neighborhood and a 2-states weakly universal one \cite{banks70}. This search for small universal automata was finally completed in 1987 when T.~Serizawa published a 3-states strongly universal CA on the von Neumann neighborhood \cite{serizawa87}.

Since their introduction, cellular automata have been broadly used as a model for biological, chemical and physical processes. In 1969, K. Zuse was the first to propose the theory that physics is computation and that the universe could be seen as a cellular automaton \cite{zuse69}. The notion of number conservation appeared as a result of attempts to implement physical conservation laws (energy, mass, particles, etc.) in cellular automata. It was first defined by K.~Nagel and M.~Schreckenberg to model road traffic flow \cite{nagel92} and has since become an important subject of research in the cellular automata community as a natural example of global property enforced by local conditions.

In the early 2000s, H.~Fuks and N.~Boccara \cite{boccara02}, B.~Durand et al. \cite{durand03} and A.~Moreira \cite{moreira03} independently proved that number-conservation was a decidable property by giving explicit characterizations of local transition rules of number-conserving cellular automata (NCCA).

Imai et al. proved the existence of a weakly universal 29-states von Neumann neighborhood rotation-symmetric NCCA and this result was later improved by N.~Tanimoto and K.~Imai who managed to reduce the number of required states to 14 by using an exact characterization of von Neumann neighborhood NCCA \cite{tanimoto09}. From the same characterization, N.~Tanimoto et al. proved that von Neumann neighborhood rotation-symmetric NCCA with 4 states or less are all trivial \cite{tanimoto09b}.

In this article, we investigate the computational power of 5-states von Neumann neighborhood rotation-symmetric NCCA. We obtain a precise description of the possible transition rules of such automata and prove that none of them can be strongly universal as their evolution from a finite initial configuration must be ultimately periodic. However, we show that they are capable of some computation by exhibiting patterns that simulate the behavior of some boolean circuit elements.

\section{Definitions}

In this article, we will only consider deterministic 2-dimensional cellular automata working on the von Neumann neighborhood.

\begin{definition}[Cellular automaton]
	A 2-dimensional von Neumann neighborhood cellular automaton (CA) is a triple $\ACA = (Q, f, q_0)$ where
	\begin{itemize}
		\item $Q$ is a finite set. The elements of $Q$ are called \emph{states};
		\item $f:Q^5 \rightarrow Q$ is a function called the \emph{local transition function} of $\ACA$;
		\item $q_0\in Q$ is called the \emph{quiescent state} and $f(q_0,q_0,q_0,q_0,q_0) = q_0$.
	\end{itemize}

	A 2-dimensional \emph{configuration} over $Q$ is a mapping $\C:\ZZ^2\rightarrow Q$. The elements of $\ZZ^2$ are called \emph{cells} and for a cell $c\in\ZZ^2$, we say that $\C(c)$ is the \emph{state} of $c$ in the configuration $\C$.
	
	The set of all configurations over $Q$ is denoted by
	\begin{displaymath}
		\Conf(Q)=\{c, c:\ZZ^2\rightarrow Q\}
	\end{displaymath}
	
	From the local transition function $f$, we define a global transition function $F:\Conf(Q)\rightarrow \Conf(Q)$ obtained by replacing the state of each cell by the result of $f$ applied to the cell's state and the states of its $4$ closest neighbors\footnote{This is the von Neumann neighborhood.}:\\
	$\forall \C\in\Conf(Q),$
	\begin{displaymath}
		F(\C) = \left\{\begin{array}{rl}
			\ZZ^2 &\rightarrow Q \\
			(x,y) &\mapsto f(\C(x,y), \C(x, y+1), \C(x+1, y), \C(x, y-1), \C(x-1, y))
			\end{array}\right.
	\end{displaymath}
\end{definition}

As is commonly done in dynamical systems theory, we will use the same notation for the cellular automaton and its global transition function. The image of a configuration $\C$ by the global transition function of an automaton $\ACA$ will therefore be denoted $\ACA(\C)$.

\begin{definition}[Finite configurations]
	
	Given a cellular automaton $\ACA=(Q, f, q_0)$, we say that a configuration $\C\in \Conf(Q)$ is \emph{finite} if only a finite number of cells are in a state other than $q_0$ in $\C$.
	
	The \emph{rectangular bound} of a finite configuration $\C$ is the smallest rectangle containing all the non-quiescent cells of $\C$. We call \emph{width} and \emph{height} of the configuration $\C$ the width and height of its rectangular bound.
\end{definition}

From the locality of the transition rule of the automaton and the fact that $f(q_0,q_0,q_0,q_0,q_0) = q_0$ it is clear that the image by a CA of a finite configuration $\C$ of dimensions $(w\times h)$ is also finite and of dimensions $(w'\times h')$ with $w'\leq w+2$ and $h'\leq h+2$.

\begin{definition}[Rotation-symmetry]
	
	A von Neumann neighborhood cellular automaton $\ACA=(Q,f,q_0)$ is said to be \emph{rotation-symmetric} if its local transition function $f$ is invariant by rotation of a quarter cycle:\\
	\begin{displaymath}
		\forall c, u, r, d, l \in Q,\quad
		f(c,u,r,d,l) = f(c,r,d,l,u)
	\end{displaymath}
\end{definition}

If the set of states of the automaton is a subset of $\ZZ$, it is possible to quantify the variations of state values between a configuration and its image. The notion of number-conservation is inspired from physical conservation laws and states that the total sum of all states on a configuration is conserved by the global transition function.

\begin{definition}[Number conservation]
	
	A \emph{number-conserving cellular automaton} (NCCA) is a cellular automaton $\ACA = (Q, f, q_0)$ such that $Q\subset \ZZ$ and for any finite configuration $\C\in\Conf(Q)$,
	\begin{displaymath}
		\sum_{c\in\ZZ^2} \ACA(\C)(c) - \C(c) = 0
	\end{displaymath}
\end{definition}

There are other commonly used definitions of number-conservation, some of which apply to infinite configurations as well, but most of these definitions are known to be equivalent~\cite{durand03}.

In this article, we will be studying von Neumann neighborhood rotation-symmetric number-conserving cellular automata. These will be denoted RNCA from now on.

\begin{definition}[Trivial automaton]
	A cellular automaton $\ACA=(Q,f,q_0)$ is said to be \emph{trivial} if for every configuration $\C\in\Conf(Q)$, $\ACA(\C) = \C$.
\end{definition}

\begin{definition}[Strong Turing universality]
	A cellular automaton $\ACA$ is said to be strongly Turing-universal if it can simulate any deterministic Turing machine from a finite initial configuration.
\end{definition}

\section{Characterization of 5-states RNCA}
\label{sec:characterization}

In this section, we consider 5-states RNCA. We use the characterization by Tanimoto et al. \cite{tanimoto09} to show that all RNCA with less than 5 states are trivial and describe precisely the possible local transition functions of 5-states RNCA.

\begin{theorem}[Tanimoto et al. \cite{tanimoto09}]
	\label{theo:rsncca}
	A rotation-symmetric von Neumann neighborhood CA $\ACA=(Q, f, q_0)$ with $Q\subset \ZZ$ is number-conserving if and only if\\
	$\exists g,h: Q^2 \rightarrow \mathbb{Z},\; \forall c,u,r,d,l \in Q,$
\begin{align*}
	f(c,u,r,d,l) = &\; c + g(c,u)  + g(c,r)  + g(c,d)  + g(c,l) \\
		& + h(u,r) + h(r,d) + h(d,l) + h(l,u) \\
	g(c,u) = & -g(u,c),\qquad h(u,r) = -h(r,u)
\end{align*}
\end{theorem}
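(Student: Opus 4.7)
The proof splits into two directions.

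For \emph{sufficiency}, assume $f$ has the claimed form. For any finite configuration $\C$, the total variation $\sum_{x\in\ZZ^2}(\ACA(\C)(x) - \C(x))$ decomposes into a sum of $g$-contributions over cell--neighbour incidences and a sum of $h$-contributions over diagonal pairs of neighbours. The term $g(\C(x), \C(n))$ produced at $x$ from a von Neumann neighbour $n$ is matched with $g(\C(n), \C(x))$ produced at $n$ from its own neighbour $x$, and these cancel by antisymmetry of $g$. Similarly, the term $h(u, r)$ at cell $x$ (with $u$ its north and $r$ its east neighbour) is matched with the term $h(d, l)$ at the cell $x + (1,1)$, whose south and west neighbours are precisely $r$ and $u$; these cancel by antisymmetry of $h$, and a symmetric pairing handles the NE--SW diagonals via $h(r,d)$ and $h(l,u)$. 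The central $c$-term reproduces $\C(x)$, so $\ACA$ is number-conserving.

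For \emph{necessity}, my plan is to invoke the general characterization of $2$D von Neumann NCCAs established in \cite{boccara02,durand03,moreira03}, which decomposes any number-conserving $f$ as
\[
  f(c,u,r,d,l) = c + \alpha_N(c,u) + \alpha_E(c,r) + \alpha_S(c,d) + \alpha_W(c,l) + \beta_{NE}(u,r) + \beta_{SE}(r,d) + \beta_{SW}(d,l) + \beta_{NW}(l,u),
\]
with antisymmetry conditions $\alpha_N(c,n) = -\alpha_S(n,c)$, $\alpha_E(c,n) = -\alpha_W(n,c)$, $\beta_{NE}(a,b) = -\beta_{SW}(b,a)$, and $\beta_{SE}(a,b) = -\beta_{NW}(b,a)$. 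Rotation symmetry $f(c,u,r,d,l) = f(c,r,d,l,u)$ then provides a second such decomposition whose eight functions are cyclic permutations of the first. Collapsing the four rotationally-shifted decompositions into a single symmetric one yields a common edge function $g := \alpha_N = \alpha_E = \alpha_S = \alpha_W$ and a common corner function $h := \beta_{NE} = \beta_{SE} = \beta_{SW} = \beta_{NW}$; the antisymmetries specialize exactly to $g(c,u) = -g(u,c)$ and $h(u,r) = -h(r,u)$.

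The hard step is that the $\alpha_\bullet$ and $\beta_\bullet$ are not canonically determined by $f$: the non-symmetric decomposition carries a gauge freedom (shifts by auxiliary functions that preserve the antisymmetry identities), and this freedom must be exploited to pick rotation-compatible representatives. The intended remedy is to symmetrize within the gauge orbit, choosing---by an explicit integer-valued adjustment---a representative invariant under the cyclic shift of directions induced by rotation. Since $f$ itself is rotation-invariant, such a representative exists, and once fixed, $g$ and $h$ are read off directly. A self-contained alternative would be to skip the non-symmetric step entirely and construct $g$ and $h$ directly from the values of $f$ on configurations supported on one or two non-quiescent cells, using number-conservation on slightly larger configurations as the consistency check that the resulting pair-contributions assemble back to $f$.
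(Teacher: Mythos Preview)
The paper does not prove this theorem at all: it is quoted as a known result from Tanimoto--Imai~\cite{tanimoto09} and used as a black box, so there is no ``paper's own proof'' to compare against. I therefore assess your proposal on its own.

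Your sufficiency direction is correct and complete: the pairing of $g$-terms across adjacent cells and of $h$-terms across diagonally adjacent cells is exactly the telescoping that makes the global variation vanish.

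The necessity direction, however, has a genuine gap. First, the eight-function decomposition you invoke is not what \cite{boccara02,durand03,moreira03} actually provide; those references give local arithmetic criteria for number conservation, not a flow-function decomposition for the two-dimensional von Neumann neighborhood. The flow form you write down is essentially the (non-rotation-symmetric) content of~\cite{tanimoto09} itself, so you are close to assuming what is to be proved. Second, even granting such a decomposition, the symmetrization step is only asserted. Averaging the four rotated decompositions would produce $g=\tfrac{1}{4}(\alpha_N+\alpha_E+\alpha_S+\alpha_W)$ and similarly for $h$, and nothing you have said guarantees these are integer-valued; the gauge freedom you mention does not obviously contain an integer-valued rotation-invariant representative, and you do not exhibit one. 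Your fallback plan---building $g$ and $h$ directly from values of $f$ on small configurations---is in fact the viable route (the paper's remark that $4g(x,y)=f(x,y,y,y,y)-x$ points to it), but you have not carried it out, and the consistency check that the residual after subtracting the $g$-part really factors through an antisymmetric $h$ on pairs of neighbours is the substantive work.
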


The function $g$ represents direct transfers of value to the central cell from its neighbors (along the horizontal and vertical directions). The function $h$ corresponds to indirect (diagonal) transfer between the neighbors of the central cell. The functions $g$ and $h$ are called \emph{flow functions} of the automaton $\ACA$.

\textbf{Remark:} If $\ACA=(Q, f, q_0)$ is a RNCA, the function $g$ is uniquely defined by $f$ since for all states $x, y\in Q$,
\begin{displaymath}
	f(x, y, y, y, y) = x + 4g(x, y) + 4h(y, y) = x + 4g(x, y)
\end{displaymath}
As for the function $h$, for all states $x, y, z, t\in Q$ the value of $h(x, y) + h(y, z) + h(z, t) + h(t, x)$ is uniquely defined but there are multiple functions matching this condition as discussed in Subsection~\ref{ssec:indirect_flow}.

\subsection{Direct Flow}

\begin{lemma}
	\label{lem:gtrivial}
	For a RNCA $\ACA=(Q,f,q_0)$ with flow functions $g$ and $h$, if $g\equiv 0$ then $\ACA$ is trivial.
\end{lemma}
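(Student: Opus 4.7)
The plan is to substitute $g\equiv 0$ into the characterization of Theorem~\ref{theo:rsncca} and then argue that the remaining indirect-flow contribution is forced to vanish as well --- not by invoking the number-conservation axiom again, but simply because $Q$ must be a finite subset of $\ZZ$.

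Under $g\equiv 0$ the local rule provided by Theorem~\ref{theo:rsncca} reduces to
\[
	f(c,u,r,d,l) \;=\; c + H(u,r,d,l), \qquad H(u,r,d,l) := h(u,r)+h(r,d)+h(d,l)+h(l,u).
\]
As observed in the remark following Theorem~\ref{theo:rsncca}, although $h$ itself is not uniquely determined by $f$, the 4-cycle sum $H(u,r,d,l)$ is, so its value depends only on $\ACA$.

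The core of the argument is then a one-line observation. Fix any $u,r,d,l\in Q$ and set $k = H(u,r,d,l)\in\ZZ$. Since $f$ has codomain $Q$, we have $c+k\in Q$ for every $c\in Q$, i.e.\ $Q+k\subseteq Q$. But $Q$ is a finite subset of $\ZZ$: if $k>0$ then $\max Q + k$ would lie strictly above $\max Q$, and symmetrically for $k<0$. Hence $k=0$. This holds for every choice of $u,r,d,l$, so $f(c,u,r,d,l)=c$ for all inputs, which is precisely the definition of triviality.

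I do not anticipate any genuine obstacle: the proof is essentially a substitution into Theorem~\ref{theo:rsncca} followed by an elementary remark about translations of finite integer sets. What is perhaps worth emphasizing in the write-up is that the number-conservation condition has already been fully absorbed into the form of $f$ via Theorem~\ref{theo:rsncca}; the only additional ingredient needed here is the well-definedness of $f$ as a map $Q^5\to Q$.
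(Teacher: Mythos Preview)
Your proposal is correct and follows essentially the same approach as the paper: both substitute $g\equiv 0$ into Theorem~\ref{theo:rsncca} to obtain $f(c,u,r,d,l)=c+k$ with $k$ independent of $c$, and then use finiteness of $Q\subset\ZZ$ to force $k=0$. The only cosmetic difference is that the paper phrases the contradiction as ``$q+\delta\in Q$ for all $q$, hence $Q$ is infinite'', whereas you invoke $\max Q$ and $\min Q$ directly; these are the same observation.
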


\begin{proof}
	We show that if the automaton is not trivial, it must have infinitely many states. According to Theorem~\ref{theo:rsncca}, if $g\equiv 0$, the local transition function of the automaton is
	\begin{displaymath}
		f(c,u,r,d,l)
		= c + h(u, r) + h(r, d) + h(d, l) + h(l, u)
	\end{displaymath}
	
	If $\ACA$ is not trivial, there exist states $c, u, r, d, l\in Q$ and an integer $\delta \neq 0$ such that
	\begin{displaymath}
		f(c,u,r,d,l) = c + \delta
	\end{displaymath}
	so $h(u, r) + h(r, d) + h(d, l) + h(l, u)=\delta$ and for any state $q\in Q$,
	\begin{displaymath}
		f(q,u,r,d,l)
		= q + h(u, r) + h(r, d) + h(d, l) + h(l, u)
		= q + \delta
	\end{displaymath}
	This means that for any $q\in Q$, $(q+\delta)$ is also a state of $\ACA$ which is not possible if $Q$ is finite.\qed
\end{proof}

\begin{lemma}
	\label{lem:minstates}
	Let $\ACA = (Q, f, q_0)$ be a RNCA with flow functions $g$ and $h$. For any two states $a,b\in Q$ if we denote $g(a,b)=\alpha$, then
	\begin{displaymath}
		\{a, a + \alpha, a + 2\alpha, a + 3\alpha, a + 4\alpha, b, b - \alpha, b - 2\alpha, b - 3\alpha, b - 4\alpha\} \subseteq Q
	\end{displaymath}
\end{lemma}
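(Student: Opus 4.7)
The plan is to apply Theorem~\ref{theo:rsncca} directly and exhibit, for each $k \in \{0,1,2,3,4\}$, a specific neighborhood of an $a$-cell whose image under $f$ is $a + k\alpha$. Since the state set $Q$ must be closed under $f$, this gives $\{a, a+\alpha, a+2\alpha, a+3\alpha, a+4\alpha\} \subseteq Q$, and the symmetric claim around $b$ will follow from $g(b,a) = -g(a,b) = -\alpha$ by swapping the roles of $a$ and $b$.

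First I would record three elementary consequences of antisymmetry: $g(x,x)=0$ and $h(x,x)=0$ for every $x\in Q$, and $h(x,y)+h(y,x)=0$ for every $x,y\in Q$. These identities mean that whenever a neighbor of an $a$-cell is itself $a$ it contributes $0$ to the direct flow, and whenever two diagonally adjacent neighbors in the indirect sum are the pair $(x,y)$ and $(y,x)$ their contributions cancel. Restricting to neighborhoods with center $a$ and neighbor values in $\{a,b\}$ then makes the evaluation of $f$ essentially mechanical.

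Next I would check the five cases. The value $k=0$ is immediate, and $k=4$ is the identity $f(a,b,b,b,b) = a + 4\alpha$ already noted in the remark following Theorem~\ref{theo:rsncca}. For $k=1$, the neighborhood $(u,r,d,l) = (b,a,a,a)$ gives direct flow $g(a,b) + 3g(a,a) = \alpha$ while the indirect flow $h(b,a) + h(a,a) + h(a,a) + h(a,b)$ vanishes. For $k=3$, the mirror choice $(b,b,b,a)$ gives direct flow $3\alpha$ and indirect flow $h(b,b)+h(b,b)+h(b,a)+h(a,b)=0$. The slightly more delicate case is $k=2$: either $(b,b,a,a)$ or $(b,a,b,a)$ works, with direct flow $2\alpha$ and an indirect flow that is again $0$ by pairing up $h$-terms.

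Finally, I would obtain the second half of the set from the identity $g(b,a) = -\alpha$: repeating the five evaluations with $a$ and $b$ exchanged shows that $b + k(-\alpha) \in Q$ for $k=0,\ldots,4$, which is exactly $\{b, b-\alpha, b-2\alpha, b-3\alpha, b-4\alpha\} \subseteq Q$. There is no deep obstacle here; the only point requiring attention is arranging the $\{a,b\}$-valued neighborhoods so that the indirect-flow contribution is guaranteed to cancel, which is why neighborhoods symmetric in the $h$-pairs (equal pairs, or $(x,y)/(y,x)$ opposite pairs) are the natural choices.
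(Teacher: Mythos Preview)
Your proposal is correct and follows essentially the same argument as the paper: both record the antisymmetry identities $g(x,x)=h(x,x)=0$ and $h(x,y)+h(y,x)=0$, then evaluate $f$ on a central $a$-cell with neighbors drawn from $\{a,b\}$ to produce each $a+k\alpha$, and finally swap the roles of $a$ and $b$. Your neighborhoods $(b,a,a,a)$, $(b,b,a,a)$, $(b,b,b,a)$, $(b,b,b,b)$ are simply cyclic rotations of the paper's $(a,a,a,b)$, $(a,a,b,b)$, $(a,b,b,b)$, $(b,b,b,b)$, which by rotation-symmetry give identical values of $f$.
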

\begin{proof}
	We know that
	\begin{align*}
		g(a,b) &= -g(b,a) = \alpha, &
		g(a,a) &= g(b,b) = 0,\\
		h(a,b) &= -h(b,a), &
		h(a,a) &= h(b,b) = 0
	\end{align*}
	
	By Theorem~\ref{theo:rsncca}, we have
	\begin{align*}
		f(a,a,a,a,b) &= a + 3g(a,a) + g(a,b) + h(a,b) + h(b,a) + 2h(a,a)\\
		&= a + g(a,b) = a + \alpha
	\end{align*}
	Similarly, we get
	\begin{align*}
		f(a,a,a,b,b) &= a + 2\alpha, &
		f(a,a,b,b,b) &= a + 3\alpha, &
		f(a,b,b,b,b) &= a + 4\alpha, \\
		f(b,b,b,b,a) &= b - \alpha, &
		f(b,b,b,a,a) &= b - 2\alpha, &
		f(b,b,a,a,a) &= b - 3\alpha, \\
		f(b,a,a,a,a) &= b - 4\alpha
	\end{align*}
	\qed
\end{proof}

As a consequence of Lemmas~\ref{lem:gtrivial} and \ref{lem:minstates}, if a RNCA $\ACA=(Q, f, q_0)$ with flow functions $g$ and $h$ is not trivial, there exist two states $a,b\in Q$ such that $g(a, b) = \alpha > 0$ and the set of states of the automaton contains 5 elements in arithmetic progression from $a$ of difference $\alpha$. Since the 5 first elements of the arithmetic progression from $b$ of difference $-\alpha$ are also in $Q$, the only way for the CA to have only 5 states is that
\begin{displaymath}
	\begin{array}{l}
	b = a + 4\alpha \\
	\textrm{and}\quad\left\{\begin{array}{rl}
		g(a, b) &= \alpha \\
		g(b, a) &= -\alpha \\
		g(x, y) &= 0 \quad \textrm{otherwise}\\
	\end{array}\right.
	\end{array}
\end{displaymath}

\subsection{Indirect Flow}
\label{ssec:indirect_flow}

Let us now consider the possibilities for $h$. The function $h$ is slightly more complex than $g$ because it is only properly characterized on cycles: the exact value of $h(a,b)$ for two states has no meaning in terms of CA dynamics, the real constraints are on the sums $h(a,b)+h(b,c)+h(c,d)+h(d,a)$ for states $a,b,c,d$.

\begin{definition}
	Given a finite set $Q\subseteq \ZZ$ and a function $h:Q^2\rightarrow \ZZ$ such that for all $a,b\in Q$, $h(a,b)=-h(b,a)$, we define the \emph{cyclic extension} $\cyc h$ of $h$ as
	\begin{displaymath}
		\cyc h : \left\{\begin{array}{rcl}
			Q^* & \rightarrow & \ZZ \\
			(q_1,q_2,\ldots, q_n) & \mapsto & h(q_1,q_2)+h(q_2,q_3)+\ldots+h(q_{n-1}, q_n) + h(q_n, q_1)
			\end{array}\right.
	\end{displaymath}
\end{definition}

Note that although $\cyc h$ is entirely defined by $h$, different functions can have the same cyclic extension. For instance if $h'(x,y) = y-x+h(x,y)$ then $\cyc h = \cyc{h'}$. However, in Theorem~\ref{theo:rsncca} only the cyclic extension of $h$ is actually used to define the behavior of the automaton.

The function $\cyc h$ is entirely defined by its values on triples:\\
$\forall q_1, q_2, \ldots, q_n \in Q,$
\begin{align*}
	\cyc h(q_1, q_2, \ldots, q_n) =\; &h(q_1, q_2)+\ldots+h(q_n, q_1) \\
	=\; &h(q_1,q_2) + h(q_2,q_3) + h(q_3,q_1) \\
	&+ h(q_1,q_3) + h(q_3,q_4) + h(q_4,q_1) \\
	&+ \ldots \\
	&+ h(q_1,q_{n-1}) + h(q_{n-1},q_n) + h(q_n, 1) \\
	=\; &\cyc h(q_1,q_2,q_3) + \cyc h(q_1,q_3,q_4) + \ldots + \cyc h(q_1,q_{n-1},q_n)
\end{align*}

Moreover, it is easy to check from the definition that $\cyc h$ is
\begin{itemize}
	\item null on pairs: $\cyc h(a,b) = 0$;
	\item invariant by repetition of a state: $\cyc h(a,a,q_1,q_2,\ldots, q_n) = \cyc h(a,q_1,q_2,\ldots, q_n)$;
	\item invariant by rotation: $\cyc h(q_1,q_2,\ldots, q_n) = \cyc h(q_2,q_3,\ldots, q_n,q_1)$;
	\item anti-symmetric: $\cyc h(q_1,q_2,\ldots, q_n)=-\cyc h(q_n,q_{n-1},\ldots, q_1)$.
\end{itemize}

Let us now consider a 5-states non trivial RNCA $\ACA = (Q, f, q_0)$ with flow functions $g$ and $h$. We have already shown that the states of the automaton are $Q=\{a+i\alpha\ |\ 0\leq i\leq4\}$. Without loss of generality we can renormalize the states to $Q=\{0,1,2,3,4\}$ (substracting a constant to all states and dividing by a common factor does not affect the number conservation of the CA). We also know that $g(0,4)=1$, $g(4, 0)=-1$ and for all other $x,y\in Q$, $g(x,y)=0$.

First, let us consider $x,y,z\in Q\setminus \{0\}$. From Theorem~\ref{theo:rsncca}, we have
\begin{align*}
	f(4,x,x,y,z) &= 4 + 2g(4,x) + g(4,y) + g(4,z) + \cyc h(x,x,y,z)\\
	&= 4 + \cyc h(x,y,z)
\end{align*}
This means that $(4 + \cyc h(x,y,z))\in Q$ and since $4$ is the largest element of $Q$, we have $\cyc h(x,y,z)\leq 0$. Because $x$, $y$ and $z$ could be any state other than 0, the same reasoning gives $\cyc h(z,y,x) = -\cyc h(x,y,z) \leq 0$ which implies $\cyc h(x,y,z) = 0$.

Similarly, for states $x,y,z\in Q\setminus \{4\}$, by considering $f(0,x,x,y,z)$ we get $\cyc h(x,y,z)=0$.

So the only possible non-zero triples of $\cyc h$ are triples containing both $0$ and $4$. Because $\cyc h(0,0,4) = \cyc h(0,4) = 0$, and $\cyc h(4,4,0) = \cyc h(4,0) = 0$, the only possible non-zero triples of $\cyc h$ are those including $0$, $4$ and a third different state.

Moreover, for $x,y\in Q\setminus\{0,4\}$, we have
\begin{align*}
	\cyc h(0,4,x,y) &= h(0,4) + h(4,x) + h(x,y) + h(y,0)\\
	&= h(0,4) + h(4,x) + h(x, 0) + h(0,x) + h(x,y) + h(y,0)\\
	&= \cyc h(0,4,x) + \cyc h(0,x,y) \\
	&= \cyc h(0,4,x)
\end{align*}
but also
\begin{align*}
	\cyc h(0,4,x,y) &= \cyc h(4,x,y,0) \\
	&= h(4,x) + h(x,y) + h(y,0) + h(0,4)\\
	&= h(4,x) + h(x,y) + h(y,4) + h(4,y) + h(y,0) + h(0,4)\\
	&= \cyc h(4,x,y) + \cyc h(4,y,0) \\
	&= \cyc h(4,y,0) = \cyc h(0,4,y)
\end{align*}

So for $x,y\in Q\setminus\{0,4\}$, $\cyc h(0,4,x) = \cyc h(0,4,y)$. All that remains to do now is consider what are the possible values for $\cyc h(0,4,1)$. By considering
\begin{align*}
	f(0,0,0,4,1) &= 0 + g(0,4) + \cyc h(0,0,4,1) \\
	&= 1 + \cyc h(0,4,1)
\end{align*}
we get that $(1+\cyc h(0,4,1))\in Q$ so $\cyc h(0,4,1) \geq -1$. Similarly, by considering $f(4,0,4,4,1)$ we get $\cyc h(0,4,1) \leq 1$.

The value of $\cyc h(0,4,1)$ fully defines $\cyc h$, as we know that the function is zero-valued on all triples not containing $0$ and $4$ and all values on triples containing $0$ and $4$ can be obtained from $\cyc h(0,4,1)$ by cycle, symmetry and replacing 1 with any other state in $Q\setminus\{0,4\}$. A possible function $h$ that would correspond to such an $\cyc h$ would be defined by
\begin{align*}
	h(0,4) &= \cyc h(0,4,1)\\
	h(4,0) &= -\cyc h(0,4,1)\\
	h(x,y) &= 0 \quad \textrm{otherwise}
\end{align*}

We have therefore proved the following result
\begin{lemma}
	\label{lem:5stchar}
	If $\ACA$ is a 5-states non-trivial RNCA, there exists a constant $\beta\in\{-1,0,1\}$ such that $\ACA$ is equivalent (up to state renaming) to an RNCA with states $Q = \{0,1,2,3,4\}$ and flow functions $g$ and $h$ such that:
	\begin{align*}
		\begin{split}
			g(0,4) &= 1\\
			g(4,0) &= -1\\
			g(x, y) &= 0 \quad \textrm{otherwise}
		\end{split}
		\begin{split}
			h(0, 4) &= \beta \\
			h(4, 0) &= -\beta \\
			h(x, y) &= 0 \quad \textrm{otherwise}
		\end{split}
	\end{align*}
\end{lemma}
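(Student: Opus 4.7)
\medskip

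The plan is to assemble the pieces already developed in Sections~3.1 and~3.2 and then verify that the explicit $h$ displayed in the lemma is a legitimate representative of its cyclic extension class. All the hard work has actually been done in the preceding paragraphs; what remains is to package it cleanly and check one routine computation.

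First I would fix the normalization: by Lemmas~\ref{lem:gtrivial} and~\ref{lem:minstates}, a non-trivial 5-state RNCA must have $Q = \{a, a+\alpha, \dots, a+4\alpha\}$ with $b = a+4\alpha$ and $g$ nonzero only on the pair $(a,b)$. Subtracting $a$ and dividing by $\alpha$ (a transformation that preserves both rotation symmetry and number conservation) reduces us to $Q = \{0,1,2,3,4\}$ with $g(0,4)=1$, $g(4,0)=-1$, $g\equiv 0$ elsewhere, exactly the $g$ stated in the lemma.

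Next I would define $\beta := \cyc h(0,4,1)$. The text preceding the lemma already shows $\beta \in \{-1,0,1\}$ by examining $f(0,0,0,4,1)$ and $f(4,0,4,4,1)$, and shows that $\cyc h$ vanishes on any triple not containing both $0$ and $4$, while $\cyc h(0,4,x) = \beta$ for every $x \in \{1,2,3\}$. Combined with the triple-decomposition identity
\begin{displaymath}
\cyc h(q_1,\dots,q_n) = \cyc h(q_1,q_2,q_3) + \cyc h(q_1,q_3,q_4) + \dots + \cyc h(q_1,q_{n-1},q_n),
\end{displaymath}
together with rotation invariance and antisymmetry, this determines $\cyc h$ on all tuples as a function of $\beta$ alone.

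The last and only genuinely computational step is to check that the candidate $h$ given in the statement, namely $h(0,4)=\beta$, $h(4,0)=-\beta$, and $h\equiv 0$ otherwise, has cyclic extension equal to the $\cyc h$ just characterized. This is a short case analysis on a triple $(q_1,q_2,q_3)$: the sum $h(q_1,q_2)+h(q_2,q_3)+h(q_3,q_1)$ involves at most one nonzero term unless the triple is a permutation of $\{0,4,x\}$ with $x$ repeated, and in each case the result matches either $0$ or $\pm\beta$ as required. Since Theorem~\ref{theo:rsncca} shows that the transition function $f$ depends on $h$ only through $\cyc h$, replacing the original $h$ by this canonical one produces the same automaton $\ACA$, completing the proof.

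I do not expect a real obstacle: the only place that could cause trouble is the triple-check of the canonical $h$, and that is a finite verification of a handful of cases (triples contained in $\{1,2,3\}$, triples containing exactly one of $0,4$, triples containing both).
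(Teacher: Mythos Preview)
Your proposal is correct and follows essentially the same route as the paper: the paper's proof of the lemma is precisely the discussion in Subsections~3.1 and~3.2 leading up to its statement, and you are simply repackaging that material together with the (trivial) verification that the displayed $h$ has the required cyclic extension. The only extra content you add is making the triple-by-triple check of the candidate $h$ explicit, which the paper leaves implicit in the phrase ``a possible function $h$ that would correspond to such an $\cyc h$ would be\ldots''.
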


\textbf{Remark: } Choosing $\beta = 0$ corresponds to $h\equiv 0$, which leads to a very simple CA for which states 1, 2 and 3 are permanent (a cell in one of these states can never change) and the state of a cell in state 0 (resp. 4) increases (resp. decreases) by the number of its neighbors in state 4 (resp. 0).

Choosing $\beta = 1$ or $\beta = -1$ leads to two different RNCA that are mirror images of each other.

\section{The Power of 5-states RNCA}

In this section we use the characterization of 5-states RNCA from Lemma~\ref{lem:5stchar} to investigate the computational power of such automata. We show that although they cannot be strongly Turing universal, they can simulate some logical circuit elements, indicating that they can perform some sorts of computations.

\subsection{Strong Universality}

\begin{theorem}
	\label{theo:periodic}
	The evolution of a 5-states RNCA $\ACA$ from a finite configuration $\C$ is ultimately periodic:
	\begin{displaymath}
		\exists i, j\in \NN, \quad i<j, \quad \ACA^i(\C) = \ACA^j(\C)
	\end{displaymath}
\end{theorem}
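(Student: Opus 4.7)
The plan is to show that the rectangular bound of $\ACA^t(\C)$ remains inside a fixed rectangle depending only on $\C$; once this is established the orbit is contained in the finite set $Q^{|R|}$, and determinism forces ultimate periodicity.

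Number conservation fixes the total mass $S = \sum_c \C(c)$ throughout the evolution, so at every time step the number of non-quiescent cells is at most $S$ and in particular the number of cells in state $4$ is at most $S/4$. The entire task is therefore to bound the bounding box.

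The key local fact, read off directly from Lemma~\ref{lem:5stchar}, is that both $g$ and $h$ vanish on every pair of states except $(0,4)$ and $(4,0)$. This gives three rigidity properties: a cell in state $0$ whose neighbors all avoid state $4$ is fixed; a cell in state $4$ whose neighbors all avoid state $0$ is fixed; and a cell in state $\{1,2,3\}$ changes only when its cyclic neighbor sequence contains a $(0,4)$ or $(4,0)$ adjacency. In particular a cell just outside the bounding box can become non-quiescent only if it has a state-$4$ neighbor inside the box. Writing $R_t$ for the rightmost non-zero column, a direct transition-rule computation shows that $R_{t+1} = R_t + 1$ forces column $R_t$ to contain a $4$ at time $t$, and that the non-zero cells of column $R_t + 1$ at time $t+1$ all take the value $1$. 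Using the triple decomposition $\cyc h(u,r,d,l) = \cyc h(u,r,d) + \cyc h(u,d,l)$ together with $|\beta| \le 1$, a further short calculation bounds by $2$ the possible states of the cells of column $R_t + 1$ at time $t+2$, ruling out an immediate second expansion on the right side. Symmetric arguments apply to the three other directions.

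The main obstacle, which I expect to be the technical heart of the argument, is to promote this local damping into a global bound on the total number of expansions of the bounding box. In the easy case $\beta = 0$ the states $\{1,2,3\}$ are permanent and a $4$-cell without a $0$ neighbor is stable, so the number of state-$4$ cells is monotonically non-increasing and the configuration quickly freezes. In the delicate case $\beta = \pm 1$ a state $3$ can be promoted back to a state $4$ via the $h$ contribution, so $4$-cells may be continually regenerated and naively could drive infinitely many expansions. To close the argument I would combine the damping estimate above with a charging scheme that pairs each expansion event on a given side with a bounded amount of ``progress'' of a potential function derived from the positions and count (at most $S/4$) of the state-$4$ cells, thereby bounding the total number of expansions by a function of $S$ and of the initial configuration. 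Once the bounding box is trapped in a fixed rectangle $R$, ultimate periodicity of $(\ACA^t(\C))_{t \geq 0}$ follows from determinism and the finiteness of $Q^{|R|}$.
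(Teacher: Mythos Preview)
Your proposal has a genuine gap at exactly the point you yourself flag as ``the technical heart'': the charging scheme is never constructed. The damping estimate you sketch---that a rightward expansion produces only $1$'s in the new column, which can rise to at most $2$ one step later---is correct, but it only forbids two \emph{consecutive} expansions on the same side; it does not forbid expansion at every other step, which would still send the bounding box to infinity. You then promise a potential function ``derived from the positions and count of the state-$4$ cells'' without defining it or proving it is monotone. Since, as you note, in the $\beta=\pm 1$ case a $3$ can be promoted back to a $4$ by the $h$-contribution, no obvious monotone quantity presents itself, and without one the argument is not a proof.

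You also implicitly assume $q_0=0$: otherwise $S=\sum_c\C(c)$ is not even finite, and it certainly does not bound the number of non-quiescent cells. The paper splits into cases. When $q_0\in\{1,2,3\}$ one checks directly from Lemma~\ref{lem:5stchar} that $f(q_0,q_0,q_0,q_0,x)=q_0$ for every $x$ (all $g$- and $h$-terms vanish), so nothing outside the initial bounding rectangle ever changes and the result is immediate. Only the case $q_0\in\{0,4\}$ needs work.

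For that case the paper takes a different route that avoids any potential-function bookkeeping. Assuming for contradiction that non-quiescent cells appear arbitrarily far to the left, it studies a ``corner'' non-quiescent cell: one whose three neighbours in the column to its right and whose neighbour below (or above, depending on the sign of $\beta$) are all $0$. A short case analysis on the local rule---either the cell is a $4$, in which case its right neighbour necessarily becomes positive via $g$, or it is a $1$ that can drop to $0$ only through an $h$-contribution forcing a $4$ above it, which in turn makes the upper-right cell positive---shows that such a cell cannot transition to $0$ without some cell in the column to its right simultaneously becoming non-quiescent. Hence the rightmost non-quiescent column can never recede, and the contradiction with the conserved mass follows. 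This replaces your unspecified global potential by a single local lemma.
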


\begin{proof}
	Consider a non-trivial 5-states RNCA $\ACA$. From Lemma~\ref{lem:5stchar} we can assume that $\ACA = (Q, f, q_0)$ with $Q=\{0,1,2,3,4\}$ and consider that its flow functions $g$ and $h$ satisfy the descriptions of the lemma. We show that starting from a finite configuration $\C_0$, non-quiescent states cannot appear arbitrarily far in any direction.
	
	There are two cases to consider, depending on the choice of the quiescent state $q_0$. First if $q_0\in Q\setminus\{0,4\}$, we show that non-quiescent states cannot appear outside of the bounding rectangle of the starting configuration $\C_0$ (see Figure~\ref{fig:middlequiescent}).
	
	\begin{figure}[htbp]
		\centering
			\includegraphics[scale=1.5]{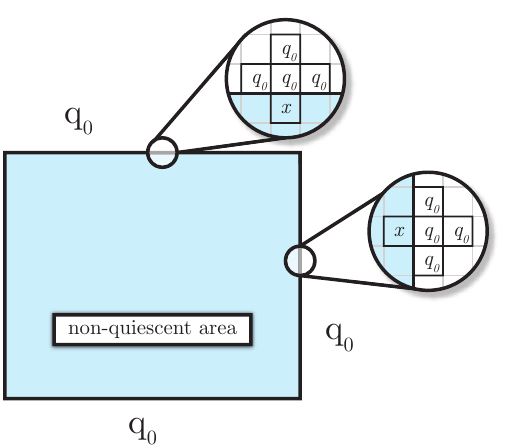}
		\caption{If $q_0\in Q\setminus\{0 ,4\}$, cells outside of the rectangular bounds of the initial configuration (represented in blue) remain in state $q_0$ at all times.}
		\label{fig:middlequiescent}
	\end{figure}
	
	By induction, if at time $t$ all cells outside of the rectangular bounds of $\C_0$ are quiescent, then each of these cells has at least 3 neighbors in state $q_0$. Since $q_0\neq 0$ and $q_0\neq 4$, for all $x\in Q$, $g(q_0,x) = \cyc h(q_0,q_0,q_0,x) = 0$, so $f(q_0, q_0, q_0, q_0, x) = q_0$ and all cells outside of the rectangular bound of $\C_0$ remain in state $q_0$ at time $(t+1)$.
	
	The second case is when $q_0\in\{0,4\}$. Assume that $q_0=0$ and let $N_0$ be the total sum of all the states in $\C_0$ (the argument for $q_0=4$ is similar but we consider the sum of $(q_0-q)$ for all states $q$ in $\C_0$). Since $\ACA$ is number conserving, the total sum of all states in subsequent configurations of the automaton remains equal to $N_0$. Moreover, because all non-quiescent states are positive, the total number of non-quiescent states on any configuration generated from $\C_0$ cannot exceed $N_0$.
	
	If quiescent states appear arbitrarily far in one direction in the evolution of $\ACA$ from $\C_0$, some cells must go from a non-quiescent state back to the quiescent state $0$ in order to keep the total number of non-quiescent states bounded. We will now assume that the direction in which the non-quiescent states appear arbitrarily far is left (it is enough to consider one direction since the automaton is rotation-symmetric).

	Assume that $h(0,4)\geq 0$ and consider the situation illustrated by part (a) of Figure~\ref{fig:corners}: a cell in a non-quiescent state $x$ with 4 cells in state 0 around it, 3 on its right side and one under (for the case $h(0,4)\leq 0$ we would consider the symmetric situation with a quiescent cell on top of the cell in state $x$ instead of under it). Let us see how such a cell can change to state $0$. There are two possibilities, represented by parts (b) and (c) in Figure~\ref{fig:corners}. Either $x=4$ and we can use the function $g$ to lower the state of the cell to $0$ or the contribution of $g$ on the cell is 0 and only $\cyc h$ can lower the state of the cell, in which case only state 1 can be lowered to $0$.
	
	\begin{figure}[htbp]
		\centering
			\includegraphics[scale=1]{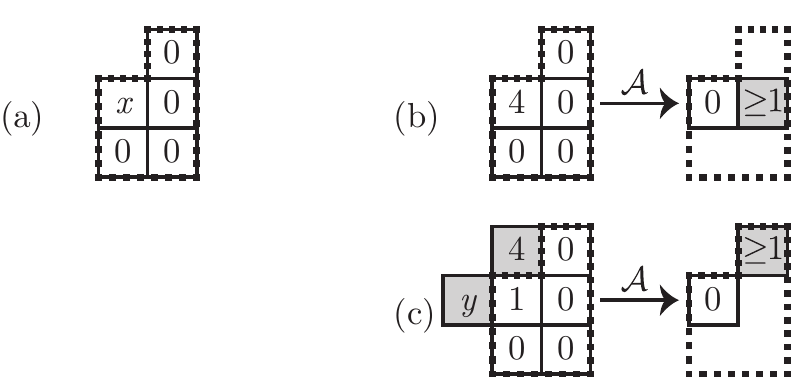}
		\caption{Case study. If a cell is in the situation (a) where $x$ is a non-quiescent state, in order to change to state 0 one of the 3 cells on the right must change from state 0 to a non-quiescent state.}
		\label{fig:corners}
	\end{figure}
		
	(b) If $x=4$, then the cell on the right of the cell in state $4$ will necessarily change to a positive state. For this cell, the contribution from $g$ will be positive (at least 1 from the central cell) and the contribution from $h$ will be 0 since $\cyc h(0,4,0,y)=0$ for all states $y$.
	
	(c) If $x=1$, then in order to change the state to $0$, the contribution from $\cyc h$ must be -1 since $g(1,y)=0$ for all states $y$. We have previously chosen to consider the case $h(0,4)\geq 0$. If $h(0,4)=0$, then the contribution from $h$ is 0 and the state of the cell cannot change to 0. However if $h(0,4)=1$, then the contribution from $\cyc h$ can be $-1$ only if the top neighbor is in state 4 and the left neighbor is in a state $y\in Q\setminus\{0,4\}$ (as illustrated by Figure\ref{fig:corners}). In that case, $\cyc h(4,0,0,y)=-1$ and the cell changes to state $0$. However, the cell on the top-right now has a neighbor in state 4, which means that for this cell the contribution from $g$ is at least 1. Moreover, the contribution from $\cyc h$ is at least 0, since $h(0,4)=1$ and so $\cyc h(0,4,x,y)\geq 0$ for all states $x,y$. This means that the cell on the top-right of the considered cell changes to a positive state.
	
	We have shown that if a cell in the situation illustrated by part (a) of Figure~\ref{fig:corners} changes its state to 0, then at least one of the cells from the column right of the considered cell changes from state 0 to a non-quiescent state. This means that as new non-quiescent states appear towards the left, previous non-quiescent states cannot be properly removed: in order to remove all non-quiescent states from a given column, it is necessary to create new ones on the column at its right, so as the non-quiescent states move towards the left, new ones appear towards the right. Eventually, the total number of non-quiescent states will be greater than $N_0$ which contradicts number-conservation.
	
	So we know that the evolution $(\ACA^i(\C_0))_{i\in\NN}$ of a 5-states RNCA $\ACA$ from a finite configuration $\C_0$ can only have non-quiescent states inside of a bounded area. Because there are only finitely many such bounded configurations, eventually the automaton must re-enter a previous configuration.\qed
\end{proof}

\begin{corollary}
	There are no 5-states strongly Turing universal RNCA.
\end{corollary}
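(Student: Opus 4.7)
The plan is to derive the corollary almost immediately from Theorem~\ref{theo:periodic}. By the definition of strong Turing universality given earlier, a strongly universal CA $\ACA$ must be able to simulate every deterministic Turing machine starting from a finite initial configuration. Theorem~\ref{theo:periodic} however asserts that every orbit of a 5-states RNCA from a finite configuration is ultimately periodic: there exist $i<j$ with $\ACA^i(\C)=\ACA^j(\C)$, and hence the whole sequence $(\ACA^n(\C))_{n\geq i}$ is periodic of period dividing $j-i$. So it suffices to exhibit a Turing machine whose own computation, from some finite input, is not ultimately periodic, and to argue that any reasonable notion of simulation transfers periodicity from the CA orbit back to the Turing machine orbit.

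The first step is to produce the witness Turing machine. A concrete choice is a machine $M$ that, starting from the empty tape, successively writes the binary representations of $0,1,2,3,\ldots$ on its tape separated by a marker symbol, never erasing what it has already printed. The sequence of instantaneous descriptions of $M$ is not ultimately periodic: after step $i$ the tape contains a prefix that strictly lengthens with time, so no two configurations are ever equal, and a fortiori the sequence cannot enter a cycle. Many other examples would work (any machine computing an unbounded non-eventually-periodic function on its tape), but this one is easy to describe and clearly halts on no input.

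The second step is to turn the ultimate periodicity of $\ACA$ into ultimate periodicity of $M$. Under any reasonable definition of simulation there is a fixed encoding $\varphi$ from instantaneous descriptions of $M$ to finite configurations of $\ACA$, a fixed decoding $\psi$ in the reverse direction, and a time-rescaling $\tau:\NN\to\NN$ with $\psi(\ACA^{\tau(n)}(\varphi(c_0)))=c_n$ where $c_n$ is the $n$-th configuration of $M$. Since $(\ACA^n(\varphi(c_0)))_{n\in\NN}$ is ultimately periodic, so is its image under $\psi\circ\ACA^{\tau(\cdot)}$, hence $(c_n)_{n\in\NN}$ would also be ultimately periodic, contradicting the choice of $M$.

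The only real obstacle is that the paper does not fix a formal definition of "simulate", so the last step has to be phrased at the level of generality appropriate for the paper: one must assume that simulation preserves the distinctness of infinitely many Turing machine configurations, which is part of any standard notion of strong simulation (injectivity of $\varphi$ together with computability of $\psi$ is enough). Once that minimal requirement is granted, the contradiction is immediate and the corollary follows.
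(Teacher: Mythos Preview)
Your argument is correct and, like the paper's, reduces immediately to Theorem~\ref{theo:periodic}, but the way you extract the contradiction differs. The paper goes through decidability: because the orbit from any finite configuration is ultimately periodic, every question about that orbit is algorithmically decidable; in particular, one could decide whether an encoded Turing machine halts, contradicting the undecidability of the halting problem. You instead produce a concrete Turing machine whose sequence of instantaneous descriptions never repeats and observe that, since the CA orbit takes only finitely many values, any decoded subsequence $(c_n)$ must do so as well. Both routes are short and valid. The paper's decidability phrasing has the advantage that it is robust to the exact formalisation of ``simulation'' (any notion worth the name lets you detect halting), whereas your approach is more self-contained and does not appeal to an external undecidability result, at the cost of having to spell out the assumptions on $\varphi$, $\psi$ and $\tau$ that you flag in your last paragraph.
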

\begin{proof}
	The evolution of a 5-states RNCA from a finite starting configuration is ultimately periodic (Theorem~\ref{theo:periodic}) and therefore decidable. If such CA could simulate a Turing machine then the halting problem and all other behavioral problems on Turing machines would be decidable.\qed
\end{proof}

\subsection{Simulation of Logical Circuits}

Although the evolution of a 5-states RNCA from a finite configuration is ultimately periodic, some non-trivial behaviors can still be observed. In particular, it is possible to simulate some key elements of boolean circuits. In this section, we consider the 5-states RNCA obtained by choosing $\beta=1$ in the characterization from Lemma~\ref{lem:5stchar}.

Figures~\ref{fig:widgetWire}, \ref{fig:widgetBranch} and \ref{fig:widgetSpeed} illustrate how to create a simple wire along which a signal can travel. The wire is made of two layers of state 2 (red), and the signal is represented by two cells in state 4 (yellow) that are ``pushed'' forward by a cell in state 1 (blue). As the signal traverses the wire, the top layer of red states is changed into blue and green states. These wires can therefore be traversed only once, which is enough for simple boolean circuit simulation but not for being used as a control circuit for a universal machine.

\begin{figure}[htbp]
	\centering
		\includegraphics[page=1,scale=1]{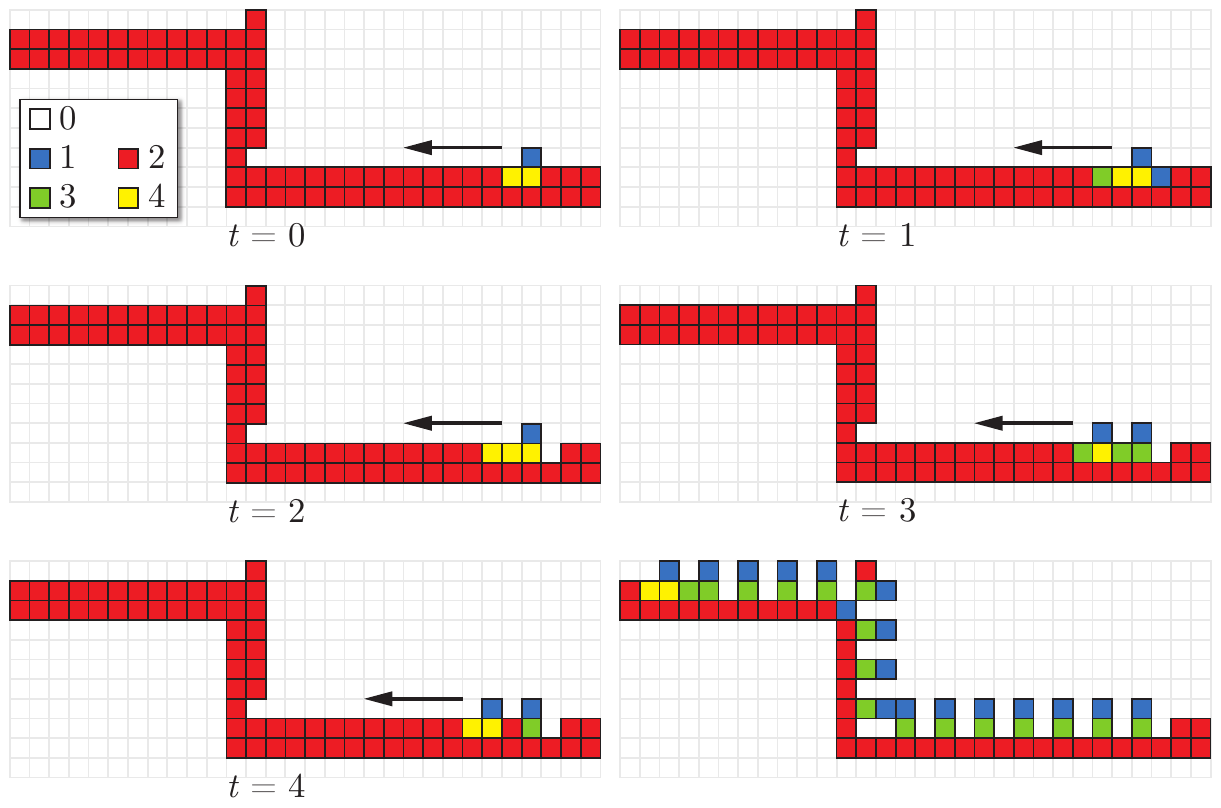} 
	\caption{A signal moving through a simple wire.}
	\label{fig:widgetWire}
\end{figure}

\begin{figure}[htbp]
	\centering
		\includegraphics[page=2,scale=1]{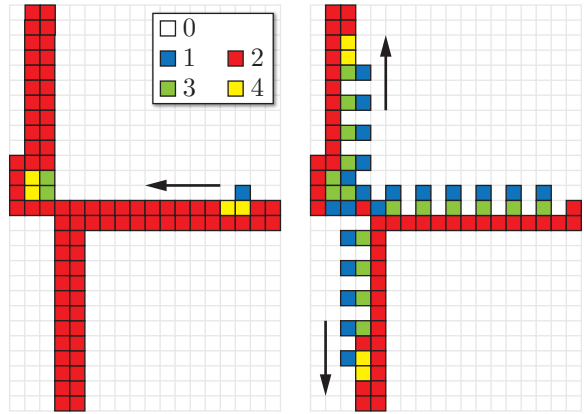}
	\caption{A branching wire.}
	\label{fig:widgetBranch}
\end{figure}

\begin{figure}[htbp]
	\centering
		\includegraphics[page=5,scale=1]{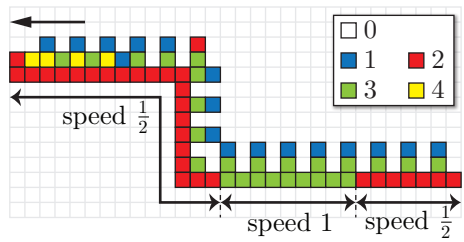}
	\caption{By using state $3$ (green) instead of state $2$ (red) on some portions of the wire, the signal moves at speed $1$ instead of speed $\frac{1}{2}$.}
	\label{fig:widgetSpeed}
\end{figure}

The wires can be split in two to duplicate a signal (Figure~\ref{fig:widgetBranch}). By changing the second row of red cell into green states, the signal can be accelerated to move by one cell at each time step (Figure~\ref{fig:widgetSpeed}) which can be a convenient way to synchronize signals.

As for logical gates, Figures~\ref{fig:widgetAND} and \ref{fig:widgetNAND} illustrate how to implement an \texttt{AND} and a    \texttt{$\overline{A}$ AND B} gate. It is known that the \texttt{$\overline{A}$ AND B} gate alone is sufficient to simulate a \texttt{NOT} and an \texttt{OR} gate so these elements would be sufficient for boolean operations \cite{banks70}.

\begin{figure}[htbp]
	\centering
		\includegraphics[page=3,scale=1]{widgets.pdf}
	\caption{Logical $\operatorname{AND}$ gate.}
	\label{fig:widgetAND}
\end{figure}

\begin{figure}[htbp]
	\centering
		\includegraphics[page=4,scale=1]{widgets.pdf}
	\caption{$\overline{A} \operatorname{AND} B$ gate.}
	\label{fig:widgetNAND}
\end{figure}

Note that the boolean value for 0 is simply represented by the absence of a signal. Therefore, careful synchronisation is required when implementing multiple input logical gates as the possible input signals must arrive at the same time (but this can be done either by using the variable speed wires from Figure~\ref{fig:widgetSpeed} or by artificially increasing the length of a wire with a detour.

However, until now we have been unable to devise a wire crossing pattern which is required for a full simulation of boolean circuits. Because of the destructive nature of signal propagation along a wire, wire crossing might prove impossible to implement.

% \bibliographystyle{plain}
% \bibliography{../../travail/mybib}

\end{document}